\documentclass{ifacconf}

\usepackage{graphicx}      
\usepackage{natbib}        
\usepackage{bbm}
\usepackage{amsmath}
\usepackage{svg}
\usepackage{siunitx}
\usepackage{enumerate}
\DeclareMathOperator*{\argmin}{argmin}
\begin{document}
\begin{frontmatter}

\title{Distributed Synthesis of Gray-Box Distributed \(\mathcal{H}_2\) Controllers
  } 


\author[First]{Michael C. A. Nestor} and
\author[First]{Fei Teng} 

\address[First]{Department of Electrical and Electronic Engineering, Imperial College London, London, SW7 2AZ, UK 
    (e-mails: m.nestor22@imperial.ac.uk, f.teng@imperial.ac.uk).}

\begin{abstract}                
    Distributed controller synthesis offers scalable and privacy-preserving control design, but typical state-of-the-art approaches either assume white-box models or resort to centralized synthesis. In this paper, we combine partially known model knowledge and an input-state dataset within a distributed gray-box scheme to design \(\mathcal{H}_2\) controllers. Our method can handle unknown dynamics and offers scalable synthesis. Each agent communicates with a set of neighbors determined by the physical coupling topology of the system such that we can apply the Alternating Direction Method of Multipliers (ADMM) to solve the problem iteratively in a fully distributed fashion (i.e., without a central server). The effectiveness and flexibility of the proposed approach is demonstrated in simulations of the IEEE 39-bus power system test case.
\end{abstract}

\begin{keyword}
Cyber-physical systems, data-driven robust control, distributed robust controller synthesis, interconnected dynamical systems, multi-agent systems
\end{keyword}

\end{frontmatter}

\section{Introduction}

The real-time control of interconnected large-scale systems via a centralized controller poses operational challenges, as a single agent must handle all control communication traffic and computational effort between controller executions. Distributed control offers a scalable and privacy-preserving alternative for these systems, where multiple control agents act independently whilst sharing information in a peer-to-peer manner to enable coordination and guarantee stability. However, scalability challenges persist if the distributed controller itself is computed by solving a single system-wide structured control design problem (\cite{Rotkowitz-Lall-2006-Characterization-Convex-Problems-Decentralized-Control, Dvijotham-2015-Convex-Structured-Controller-Design-Finite-Horizon}). Distributed synthesis is an attractive way to achieve scalability by decomposing the global design problem into neighborhood sub-problems (\cite{De-Pasquale-2020-Extended-Full-Block-S-Procedure-Distributed-Control-Interconnected-Systems}).

Model-based controller design requires accurate knowledge of the system dynamics. However, such a model may be unavailable or infeasibly expensive to obtain, particularly in large-scale systems. Alternatively, controllers can be designed directly from data, avoiding an intermediate model identification stage (\cite{De-Persis-2020-Formulas-Data-Driven-Control}, \cite{Coulson-2019-Shallows-of-the-DeePC}). In realistic scenarios, partial model knowledge may be available leading to \emph{gray-box} methods; for example, an agent may know their local model but not the model of interactions with neighboring subsystems. Recently, prior knowledge and data were combined within centralized robust controller design (\cite{Berberich-et-al-2023-Prior-Knowledge-Data-Robust-Design}). Here, we address the intersection of gray-box design and distributed synthesis of a structured controller.

\subsection{Related Work}

The topic of distributed model-based controller synthesis for interconnected systems has been addressed comprehensively in the literature. For example, \cite{De-Pasquale-2020-Extended-Full-Block-S-Procedure-Distributed-Control-Interconnected-Systems} and \cite{Sturz-et-al-2021-Distributed-Control-Design-Heterogeneous-Interconnected} developed approaches for homogeneous and heterogeneous interconnected systems, respectively. \cite{Conte-et-al-2012-Distributed-Synthesis-Control-Constrained-Linear-Sys} proposed a distributed synthesis scheme for the control of linear systems subject to trajectory constraints.

In recent years, a wide variety of direct data-driven control approaches have been proposed, including state feedback control (\cite{De-Persis-2020-Formulas-Data-Driven-Control}), data-driven model predictive control (MPC) (\cite{Berberich-Allgower-Review-Data-Driven-MPC}), and output feedback control (\cite{Li-2024-Controller-Synthesis-Noisy-Input-Noisy-Output-Data}). Most data-driven methods for distributed control solve a system-wide synthesis problem, such as \cite{Miller-et-al-2025-Data-Driven-Structured-Robust-Control-Linear-Sys} and \cite{Yang-et-al-2025-Data-Driven-Structured-Controller-Design-S-Procedure}. On the other hand, \cite{Kohler-2022-Data-Driven-Distributed-MPC-Coupled-Linear} undertake decentralized terminal controller synthesis and assume sufficiently weak coupling between subsystems. A black-box data-driven distributed robust controller synthesis approach is proposed by \cite{Steentjes-2022-PhD-Thesis}, which is unable to utilize any available model knowledge. Meanwhile, data-driven system-level synthesis (SLS) (\cite{Xue-Matni-2021-Data-Driven-SLS}), with an MPC variant (\cite{Alonso-2022-Data-Driven-Distributed-Localized-MPC}), may be able to handle gray-box design but could suffer from a poor trade-off between performance and communication overheads due to the enforcement of localized responses.

\subsection{Main Contributions}

In this paper, we propose a distributed synthesis approach to combine model knowledge and data within \(\mathcal{H}_2\) controller design. The system dynamics are assumed to be linear and full state feedback is assumed available. We extend recent results in data-driven control (\cite{Berberich-et-al-2023-Prior-Knowledge-Data-Robust-Design}) to the distributed setting; to the best of the authors' knowledge, this is the first work to propose distributed gray-box controller design based on this line of theory. Under certain sufficient conditions, we decompose the controller synthesis problem so that it is amenable to distributed optimization, thus enabling scalable distributed synthesis. 
We employ the Alternating Direction Method of Multipliers (ADMM) (\cite{Boyd-2011-ADMM}) to solve the problem in an iterative fashion, and detail a scheme that does not require a central server within the solution workflow.



\subsection{Notation}

We use \(\mathbbm{R}\) and \(\mathbbm{Z_+}\) to denote the sets of real numbers and positive integers, respectively. We denote the vertical concatenation of vectors \(v_i\) for \(i \in \mathcal{I}\) by \([v_i]_{i \in \mathcal{I}}\). The \(n \times n\) identity matrix is denoted by \(I_n\), where the dimension may be omitted if it can be inferred from context.

\section{Problem Formulation}
\label{section:problem_formulation}


Consider a linear time-invariant (LTI) dynamical system made up of \(M \in \mathbbm{Z}_+\) interconnected subsystems, coupled via their states. We denote the set of subsystems by \(\mathcal{V} := \{1,\ldots,M\}\). The structure of the physical interconnections can be represented by a directed graph \(\mathcal{G}_P := (\mathcal{V}, \mathcal{E}_P)\), where \(\mathcal{E}_P \subseteq \mathcal{V} \times \mathcal{V}\) is the graph edge set. An edge \((j,i) \in \mathcal{E}_P\) exists if the state trajectory of subsystem \(i \in \mathcal{V}\) is directly influenced by the state of subsystem \(j \in \mathcal{V}\), with \(i \neq j\). We define the neighborhood of \(i\) as the neighbors of \(i\) in \(\mathcal{G}\) as well as \(i\) itself; \(\mathcal{N}_i := \{j : (j,i) \in \mathcal{E}_P\} \ \cup \ \{i\}\). Each subsystem is controlled by an independent control agent, so we may also denote the set of controlling agents by \(\mathcal{V}\). We describe the LTI dynamics of each subsystem by:

\begin{subequations}
    \begin{align}
    \label{eqn:subsys_dynamics}
    x_i^{k+1} &= A_{\mathcal{N}_i} x_{\mathcal{N}_i}^k + B_i u_i^k + B_{d,i} d_i^k \\
    \label{eqn:subsys_output}
    y_i^k &= C_{y,\mathcal{N}_i} x_{\mathcal{N}_i}^k + D_{y,i} u_i^k,
\end{align}
\end{subequations}

where for subsystem \(i \in \mathcal{V}\) and at time-step \(k\), the subsystem state is given by \(x_i^k \in \mathbbm{R}^{n_i}\), the subsystem input is given by \(u_i^k \in \mathbbm{R}^{m_i}\), \(d_i^k \in \mathbbm{R}^{n_{d,i}}\) is a stochastic process disturbance, \(x_{\mathcal{N}_i}^k \in \mathbbm{R}^{n_{\mathcal{N}_i}}\) is given by \(x_{\mathcal{N}_i}^k = [x_j^k]_{j \in \mathcal{N}_i}\), and \(y_i^k \in \mathbbm{R}^{p_i}\) is a performance output signal. The dynamics of the global system may be written as:

\begin{subequations}
    \begin{align}
    \label{eqn:global_dynamics}
    x^{k+1} &= Ax^k + Bu^k + B_d d^k \\
    \label{eqn:global_output}
    y^k &= \Tilde{C}_y x^k + \Tilde{D}_y u^k,
\end{align}
\end{subequations}

where at time-step \(k\), the system state is given by \(x^k = [x_i^k]_{i \in \mathcal{V}} \in \mathbbm{R}^n\), the global input is given by \(u^k = [u_i^k]_{i \in \mathcal{V}} \in \mathbbm{R}^m\), the global disturbance is \(d^k = [d_i^k]_{i \in \mathcal{V}}\), and \(y^k \in \mathbbm{R}^p\) is a global performance output signal.  Note that \(B\) and \(B_d\) are block-diagonal, whilst \(A\) is structured by \(\mathcal{G}_P\). To simplify notation, we introduce lifting matrices \(T_i \in \{0,1\}^{n_i \times n}\), \(G_i \in \{0,1\}^{m_i \times m}\), and \(W_i \in \{0,1\}^{n_{\mathcal{N}_i} \times n}\) such that \(x_i^k = T_i x^k\), \(u_i^k = G_i u^k\), and \(x_{\mathcal{N}_i}^k = W_i x^k\). \(T_i\), \(G_i\), and \(W_i\) have exactly one entry in each row equal to 1, whilst all other entries are zero. We assume that whilst the system structure is known, the state-space model parameters are partially or fully unknown. 


\begin{assum}
    The graph \(\mathcal{G}_P\) is known and all subsystems \(i \in \mathcal{V}\) follow the dynamics \eqref{eqn:subsys_dynamics}. The global pair \((A,B)\) is stabilizable and \(B_i \neq \mathbf{0} \ \forall i \in \mathcal{V}\). The parameters of the state space matrices \(A_{\mathcal{N}_i}\), \(B_i\), and \(B_{d,i}\) in \eqref{eqn:subsys_dynamics} are partially or fully unknown \( \forall i \in \mathcal{V}\).
\end{assum}

Our goal is to design a state feedback controller \(K_{\mathcal{N}_i}\) for each subsystem, where the subsystem control input is calculated by the control law \(u_i^k = K_{\mathcal{N}_i} x_{\mathcal{N}_i}^k\), such that the global system is closed-loop stable and has a bounded \(\mathcal{H}_2\)-norm. We denote a global controller by \(K = \sum_{i \in \mathcal{V}}G_i^\top K_{\mathcal{N}_i} W_i\). The \(\mathcal{H}_2\) design problem is given in \cite{Scherer-Weiland-2011-LMIs-in-Control} Proposition 3.13 and is as follows: if there exists \(P \succ 0\), \(K\) such that:

\begin{subequations}
\label{eqn:H2_textbook_conditions}
    \begin{align}
    (A + BK)^\top P (A + BK) - P + \overline{C}^\top \overline{C} &\prec 0 \\
    \mathrm{trace}(B_d^\top P B_d) &< \gamma^2,
\end{align}
\end{subequations}

then the \(\mathcal{H}_2\) norm of the mapping \(d \mapsto y\) is bounded by \(\gamma\), where \(y^k = \overline{C} x^k\) in closed-loop. Note that we assume no direct feedthrough from \(d^k\) to \(y^k\) (cf. \eqref{eqn:global_output}), which is a necessary and sufficient condition for the possibility of a finite closed-loop \(\mathcal{H}_2\) norm (\cite{Scherer-Weiland-2011-LMIs-in-Control}). We propose a distributed synthesis solution scheme based on ideas in \cite{Jokic-Lazar-2009-Decentralized-Stabilization-Nonlinear, Conte-2016-Distributed-Synthesis-Stability-Cooperative-Distributed-MPC-Linear}.

We cannot directly tackle the problem of finding a controller and Lyapunov function since the dynamics are not fully known. Partially unknown dynamics may be represented by a Linear Fractional Transformation (LFT) of the form (\cite{Berberich-et-al-2023-Prior-Knowledge-Data-Robust-Design}):

\begin{subequations}
\label{eqn:LFT}
    \begin{align}
        \label{eqn:LFT_dynamics}
        \left[
        \begin{array}{c}
            x_i^{k+1} \\
            \hline y_i^k \\ z_i^k
        \end{array}
        \right] &= \left[\begin{array}{c|c c c}
            A_{\mathcal{N}_i}' & B_i' & B_{d,i} & B_{w,i} \\
            \hline C_{y,\mathcal{N}_i} & D_{y,i} & 0 & 0 \\ C_{z,\mathcal{N}_i} & D_{z,i} & 0 & 0
        \end{array}
        \right] \left[
        \begin{array}{c}
            x_{\mathcal{N}_i}^k \\
            \hline u_i^k \\ d_i^k \\ w_i^k
        \end{array}
        \right] \\
        w_i^k &= \Delta_{i}^\mathrm{tr} z_i^k,
    \end{align}
\end{subequations}

where \(A_{\mathcal{N}_i}'\), \(B_i'\), and \(B_{d,i}\) represent the known parts of the dynamics \eqref{eqn:subsys_dynamics}, \(y_i^k \in \mathbbm{R}^{p_i}\) is a performance output signal, and the variables \(w_i^k \in \mathbbm{R}^{n_{w,i}}\) and \(z_i^k \in \mathbbm{R}^{n_{z,i}}\) represent an uncertainty channel. 

\begin{assum}
    All matrices in \eqref{eqn:LFT} are known, except for the true uncertainty \(\Delta_i^\mathrm{tr}\) capturing the unknown dynamics.
\end{assum}

\begin{rem}
    If \(B_{d,i}\) is not fully known or is not of full column rank, the disturbance may be redefined as \(\Tilde{d}_i^k = B_{d,i} d_i^k\), then we set \(B_{d,i} = I\) in \eqref{eqn:LFT_dynamics}, at the cost of conservatism.
\end{rem} 

In order to gain an understanding of the uncertain system behavior and deal with the unknown dynamics, we take a data-driven approach.
%
%
We assume the availability of a dataset that fulfills the following global condition:

\begin{assum}
\label{assum:PE}
    An input-state dataset \((\{_du^k\}_{k=0:N-1}, \\ \{_dx^k\}_{k=0:N})\) is available and is such that \\ \(\begin{bmatrix}
        _du^0 & _du^1 & \ldots & _du^{N-1} \\ _dx^0 & _dx^1 & \ldots & _dx^{N-1} 
    \end{bmatrix}\) is of full row rank.
\end{assum}

The subscript \(d\) in \(_du^k\) and \(_dx^k\) denotes a dataset sample of \(u^k\) and \(x^k\), respectively. 
We distribute the global dataset among the control agents according to the structure of \(\mathcal{G}_P\), and define the following data matrices for each agent:
\begin{subequations}
\label{eqn:data_matries_def}
    \begin{align}
    &U_i := \begin{bmatrix}
        _du_i^0 & _du_i^1 \ldots & _du_i^{N-1}
    \end{bmatrix} \\ &X_i^+ := \begin{bmatrix}
        _dx_i^1 & _dx_i^2 \ldots & _dx_i^{N}
    \end{bmatrix} \\ & X_{\mathcal{N}_i} := \begin{bmatrix}
        _dx_{\mathcal{N}_i}^0 & _dx_{\mathcal{N}_i}^1 \ldots & _dx_{\mathcal{N}_i}^{N-1}
    \end{bmatrix} \\ 
    \label{eqn:Z_i_def}
    & Z_i := C_{z,\mathcal{N}_i} X_{\mathcal{N}_i} + D_{z,i} U_i \\
    \label{eqn:M_i_def}
    & M_i := X_i^+ - A_{\mathcal{N}_i}' X_{\mathcal{N}_i} - B_i' U_i.
\end{align}
\end{subequations}

We assume that we have prior knowledge of bounds on the disturbance, and that some prior knowledge on the uncertainty is available. We focus on quadratic full-block bounds to describe the disturbance and norm bounds to describe prior knowledge on \(\Delta_i^\mathrm{tr}\). 

\begin{rem}
    A variety of disturbance descriptions may be considered, leading to, for example, diagonal and convex hull multipliers, at the expense of increased computational effort. Structured prior knowledge descriptions such as repeated scalar blocks may also be considered. We refer to \cite{Berberich-et-al-2023-Prior-Knowledge-Data-Robust-Design} for further details.
\end{rem}

During data collection, the disturbance signal generates an unknown sequence \(\{_dd_i^k\}_{k=0:N-1}\), which may be written as a matrix \(D_i^\mathrm{tr} = \begin{bmatrix}
    _dd_i^0 & _dd_i^1 & \ldots & _dd_i^{N-1}
\end{bmatrix}\).

\begin{assum}
\label{assum:disturbance_bound}
    The unknown matrix \(D_i^\mathrm{tr}\) satisfies \(D_i^\mathrm{tr} \in \mathbf{D}_i\), where \(\mathbf{D}_i := \bigg\{ D_i \, \bigg| \, \begin{bmatrix}
                D_i^\top \\ I
            \end{bmatrix}^\top P_{d,i} \begin{bmatrix}
                D_i^\top \\ I
            \end{bmatrix} \succeq 0 \ \forall P_{d,i} \in \mathbf{P}_{d,i} \bigg\},\)
%
%
    with \(\mathbf{P}_{d,i}\) defined by:

    \begin{equation}
    \label{eqn:P_d_i_def_disturbance_full_block_quad}
        \mathbf{P}_{d,i} := \left\{ \left. \tau_i \begin{bmatrix}
            Q_{d,i} & S_{d,i} \\ S_{d,i}^\top & R_{d,i} 
        \end{bmatrix} \, \right| \, \tau_i > 0 \right\},
    \end{equation}

    where \(Q_d \prec 0\).
\end{assum}

Our prior knowledge on the uncertainty is assumed to take the following form for each subsystem:

\begin{assum}
\label{assum:prior_knowledge_set}
    The true uncertainty lies within a prior uncertainty set \(\mathbf{\Delta}_{\mathrm{pr},i}\) such that \(\Delta_i^\mathrm{tr} \in \mathbf{\Delta}_{\mathrm{pr},i} \ \forall i \in \mathcal{V}\), where \(        \mathbf{\Delta}_{\mathrm{pr},i} := \{\Delta_i : \Delta_i \Delta_i^\top \preceq \overline{\delta}_i I\},\)
 %
%
    for some known \(\overline{\delta}_i > 0\).
\end{assum}

These prior knowledge bounds on the disturbance and the uncertainty together with the data will be used in the sequel to bound the overall uncertainty set. Due to the physical coupling between subsystems, we require communication between agents to ensure that we can find a globally stabilizing controller. We assume that bidirectional communication exists between any pair of agents controlling subsystems that are connected by an edge in \(\mathcal{G}_P\). We denote an undirected cyber-layer graph by \(\mathcal{G}_C := (\mathcal{V}, \mathcal{E}_C)\), where the edge set \(\mathcal{E}_C\) is defined by \(\mathcal{E}_C := \{\{i,j\} : (i,j) \in \mathcal{E}_P \ \lor \ (j,i) \in \mathcal{E}_P\}\).

\begin{assum}
    Any pair of control agents \(i,j \in \mathcal{V}\) have bidirectional communication if \(\{i,j\} \in \mathcal{E}_C\). Network phenomena, e.g., packet loss, are negligible and latency is negligible compared to the timescale of the system dynamics.
\end{assum}

We are now equipped to design a distributed \(\mathcal{H}_2\) state-feedback controller for partially or fully unknown LTI systems through distributed synthesis, utilizing available model knowledge, data, and prior knowledge on the disturbance and unknown dynamics.

\section{Main Results}
\label{section:main_results}

\subsection{Decomposing \(\mathcal{H}_2\) Design Conditions}

We begin by finding a decomposition of the global \(\mathcal{H}_2\) design conditions \eqref{eqn:H2_textbook_conditions} that can be made amenable to distributed synthesis. Conditions that a decomposable Lyapunov function must satisfy to guarantee closed-loop stability are detailed in \cite{Jokic-Lazar-2009-Decentralized-Stabilization-Nonlinear}, which reduce to matrix inequalities for the linear dynamics \eqref{eqn:subsys_dynamics} and assuming quadratic functions (\cite{Conte-2016-Distributed-Synthesis-Stability-Cooperative-Distributed-MPC-Linear}).

\begin{lem}
    If there exist, \(\forall i \in \mathcal{V}\), a \(\gamma_i > 0\), \(P_i \succ 0\), \(K_{\mathcal{N}_i}\), \(\Theta_{\mathcal{N}_i}\) such that the following inequalities hold, then the control law \(u^k = K x^k\), where \(K = \sum_{i=1:M} G_i^\top K_{\mathcal{N}_i} W_i\), stabilizes the dynamics \eqref{eqn:global_dynamics}, and when \(u^k = K x^k\), the mapping \(d \mapsto y\) has an \(\mathcal{H}_2\)-norm bounded by \(\gamma\), where \(\gamma^2 = \sum_{i=1:M} \gamma_i^2\) and \(y^k = \sum_{i=1:M} (C_{y,\mathcal{N}_i} + D_{y,i} K_{\mathcal{N}_i}) W_i x^k\):

    \begin{subequations}
    \label{eqn:distributed_H2_global_conditions}
        \begin{align}
            \begin{aligned}
            \label{eqn:Lyap_inequality_neighbourhood}
                (A_{\mathcal{N}_i} + B_i K_{\mathcal{N}_i})^\top P_i (A_{\mathcal{N}_i} + B_i K_{\mathcal{N}_i}) - \overline{P}_i & \\ + (C_{y,\mathcal{N}_i} + D_{y,i} K_{\mathcal{N}_i})^\top (C_{y,\mathcal{N}_i} + D_{y,i} K_{\mathcal{N}_i} &) \prec \Theta_{\mathcal{N}_i}, 
            \end{aligned} \\
            \label{eqn:H2_performance_bound_true}
            \mathrm{trace}(B_{d,i}^\top P_i B_{d,i})  < \gamma_i^2, \ \ & \\
            \label{eqn:relaxation_inequality_global}
            \sum_{i=1}^M W_i^\top \Theta_{\mathcal{N}_i} W_i \preceq 0, \ \ &
        \end{align}
    \end{subequations}

    where \(\overline{P}_i := W_i T_i^\top P_i T_i W_i^\top\).
\end{lem}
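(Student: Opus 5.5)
We will reduce the statement to the centralized \(\mathcal{H}_2\) conditions \eqref{eqn:H2_textbook_conditions} with the block-diagonal Lyapunov matrix \(P := \sum_{i \in \mathcal{V}} T_i^\top P_i T_i\). Since each \(P_i \succ 0\) and the \(T_i\) select disjoint state blocks that together cover \(x\), we have \(P \succ 0\). The plan is to lift each local inequality \eqref{eqn:Lyap_inequality_neighbourhood} to the global state by pre- and post-multiplying with \(W_i^\top\) and \(W_i\). We then sum over \(i\) and use \eqref{eqn:relaxation_inequality_global} to remove the coupling terms \(\Theta_{\mathcal{N}_i}\).

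\textbf{Step 1: lift the local closed-loop and output matrices.} From \eqref{eqn:subsys_dynamics} we have \(T_i A = A_{\mathcal{N}_i} W_i\) and \(T_i B = B_i G_i\). The global controller satisfies \(G_i K = K_{\mathcal{N}_i} W_i\), because \(G_i G_j^\top\) equals \(I\) for \(j=i\) and \(0\) otherwise. Together these give
\[
T_i(A+BK) = (A_{\mathcal{N}_i} + B_i K_{\mathcal{N}_i}) W_i .
\]
Similarly, we define \(\overline{C} := \sum_i \overline{C}_i\), where \(\overline{C}_i := E_i (C_{y,\mathcal{N}_i} + D_{y,i}K_{\mathcal{N}_i}) W_i\) and \(E_i\) embeds \(y_i\) into the stacked output. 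This is the natural reading of \(y\) as the stack \([y_i]_{i \in \mathcal{V}}\), consistent with \(\gamma^2 = \sum_i \gamma_i^2\). The outputs occupy disjoint rows, so
\[
\overline{C}^\top \overline{C} = \sum_i W_i^\top (C_{y,\mathcal{N}_i} + D_{y,i} K_{\mathcal{N}_i})^\top (C_{y,\mathcal{N}_i} + D_{y,i} K_{\mathcal{N}_i}) W_i .
\]
If \(y\) is instead read literally as the summed expression in the statement, this identity fails because of cross terms. I would flag that and adopt the stacked interpretation.

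\textbf{Step 2: sum the lifted inequalities.} Multiplying \eqref{eqn:Lyap_inequality_neighbourhood} by \(W_i^\top\) on the left and \(W_i\) on the right preserves \(\preceq\). Strictness is only preserved on the range of \(W_i^\top\), so this step needs care. We have \(W_i^\top \overline{P}_i W_i = W_i^\top W_i T_i^\top P_i T_i W_i^\top W_i\). Since \(i \in \mathcal{N}_i\), the projector \(W_i^\top W_i\) acts as the identity on the block-\(i\) coordinates, so this equals \(T_i^\top P_i T_i\). Summing over \(i\) and using Step 1 yields
\[
(A+BK)^\top P (A+BK) - P + \overline{C}^\top \overline{C} \preceq \sum_i W_i^\top \Theta_{\mathcal{N}_i} W_i \preceq 0 .
\]

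\textbf{Main obstacle: recovering strictness.} For a nonzero \(x\), some block \(T_j x\) is nonzero. For that \(j\), \(W_j x \neq 0\), so the \(j\)-th lifted term is strictly negative on \(x\). All other terms are nonpositive on \(x\), so the sum is strictly negative and the strict inequality in \eqref{eqn:H2_textbook_conditions} holds. Stability then follows because \(P \succ 0\) and \((A+BK)^\top P(A+BK) - P \prec 0\).

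\textbf{Step 3: the trace bound.} Since \(B_d\) is block diagonal, \(B_d = \sum_i T_i^\top B_{d,i} (\cdot)\), where \((\cdot)\) is the disturbance selector for block \(i\). Hence
\[
\mathrm{trace}(B_d^\top P B_d) = \sum_i \mathrm{trace}(B_{d,i}^\top P_i B_{d,i}) < \sum_i \gamma_i^2 = \gamma^2 .
\]
Invoking \cite{Scherer-Weiland-2011-LMIs-in-Control} Proposition 3.13 then completes the proof.
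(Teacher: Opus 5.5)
Your proof is correct and follows essentially the same route as the paper. You take the block-diagonal $P=\mathrm{diag}(P_i)$, lift each neighbourhood inequality with $W_i$ and sum (the paper phrases this as the Jokic--Lazar decomposed Lyapunov decrease for $V=\sum_i V_i$), absorb the $\Theta_{\mathcal{N}_i}$ terms via the relaxation constraint, split the trace using block-diagonality, and invoke Scherer--Weiland Proposition 3.13.

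Your argument is more careful than the paper's on two points it leaves implicit. First, you explicitly recover strict negativity using $i\in\mathcal{N}_i$. Second, you adopt the stacked reading of $y$, which is exactly what the paper's choice $\overline{C}^\top\overline{C}=\sum_i W_i^\top (C_{y,\mathcal{N}_i}+D_{y,i}K_{\mathcal{N}_i})^\top(C_{y,\mathcal{N}_i}+D_{y,i}K_{\mathcal{N}_i})W_i$ presupposes.
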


\begin{pf}
    Let \(V(x^k) = \sum_{i=1}^M V_i(x_i^k)\), where \(V_i(x_i^k) = x_i^{k^\top} P_i x_i^k\), let \(l_i(x_{\mathcal{N}_i}^k, u_i^k) = 
    x_{\mathcal{N}_i}^{k^\top} \Tilde{C}_{\mathcal{N}_i}^\top \Tilde{C}_{\mathcal{N}_i} x_{\mathcal{N}_i}^k\), where \(\Tilde{C} = (C_{y,\mathcal{N}_i} + D_{y,i} K_{\mathcal{N}_i})\) and \(u_i^k\) is calculated according to the control law \(u_i^k = K_{\mathcal{N}_i} x_{\mathcal{N}_i}^k\) (cf. \eqref{eqn:subsys_output}), and let \(\theta_i(x_{\mathcal{N}_i}^k) = x_{\mathcal{N}_i}^{k^\top} \Theta_{\mathcal{N}_i} x_{\mathcal{N}_i}^k\). Then, if the following inequalities hold, \(V(x^k)\) is a valid Lyapunov function such that \(V(x^{k+1}) - V(x^k) \leq - \sum_{i=1:M} l_i(x_{\mathcal{N}_i}^k, K_{\mathcal{N}_i} x_{\mathcal{N}_i}^k)\) (\cite{Jokic-Lazar-2009-Decentralized-Stabilization-Nonlinear}):

    \begin{subequations}
    \label{eqn:Jokic_inequality_Lyap}
        \begin{align}
            \! \! \! \! V_i(x_i^{k+1}) - V_i(x_i^k) &\leq -l_i(x_{\mathcal{N}_i}^k, K_{\mathcal{N}_i} x_{\mathcal{N}_i}^k) + \theta_{\mathcal{N}_i}(x_{\mathcal{N}_i}^k), \\
            \sum_{i=1:M} \theta_i(x_{\mathcal{N}_i}^k) &\leq 0.
    \end{align}
    \end{subequations}

    Under nominal dynamics in the standard \(\mathcal{H}_2\) setting and using the definition of a negative semidefinite matrix, \eqref{eqn:Jokic_inequality_Lyap} is equivalent to \eqref{eqn:Lyap_inequality_neighbourhood} and \eqref{eqn:relaxation_inequality_global}. Let \(P = \mathrm{diag}(P_i)_{i=1:M}\). We can express \(V(x^k) = x^{k^\top} P x^k\), and under nominal global dynamics, \(V(x^{k+1}) = x^{k^\top} (A + BK)^\top P (A+BK) x^k\). We thus infer that if \eqref{eqn:Lyap_inequality_neighbourhood} and \eqref{eqn:relaxation_inequality_global} hold, then \((A + BK)^\top P (A + BK) - P + \overline{C}^\top \overline{C} \prec 0\),
%
%
    where we let \(\overline{C}^\top \overline{C} = \sum_{i=1:M} W_i^\top \Tilde{C}_{\mathcal{N}_i}^\top \Tilde{C}_{\mathcal{N}_i} W_i\). Since \(B_d\) and \(P\) are block-diagonal, \(\mathrm{trace}(B_d^\top P B_d) = \\ \sum_{i=1:M} \mathrm{trace} (B_{d,i}^\top P_i B_{d,i})\), hence if \(\mathrm{trace} (B_{d,i}^\top P_i B_{d,i}) < \gamma_i^2 \ \forall i\), \(\mathrm{trace}(B_d^\top P B_d) < \gamma^2\), where \(\gamma^2 = \sum_{i=1:M} \gamma_i^2\). Therefore a feasible solution to \eqref{eqn:distributed_H2_global_conditions} satisfies \eqref{eqn:H2_textbook_conditions}, which by \cite{Scherer-Weiland-2011-LMIs-in-Control} Proposition 3.13 guarantees a stable closed-loop response \(d \mapsto y\) with an \(\mathcal{H}_2\)-norm bounded by \(\gamma\) under feedback control \(u^k = Kx^k\), thus completing the proof.
    \qed
\end{pf}

The unknown dynamics remain, preventing us from directly tackling the control synthesis problem. We turn to data-driven robust control techniques to deal with these.

\subsection{Distributed Multipliers from Prior Knowledge and Data}

We extend multiplier classes proposed for the centralized setting by \cite{Berberich-et-al-2023-Prior-Knowledge-Data-Robust-Design} to the distributed setting. Considering \eqref{eqn:LFT} and the definitions of the data matrices \eqref{eqn:data_matries_def}, we see that \(X_i^+ = A_{\mathcal{N}_i}' X_{\mathcal{N}_i} + B_i' U_i + B_{d,i} D_i^\mathrm{tr} + B_{w,i} \Delta_i^\mathrm{tr} (C_{z,\mathcal{N}_i} X + D_{z,i} U_i)\). Considering the definitions of \(M_i\) \eqref{eqn:M_i_def} and \(Z_i\) \eqref{eqn:Z_i_def}, we define a set of learned uncertainties by \(\mathbf{\Delta}_i^\mathrm{learned} := \{ \Delta_i \ | \ M_i = B_{w,i} \Delta_i Z_i + B_{d,i} D_i \ \mathrm{s.t.} \ D_i \in \mathbf{D}_i \}.\)
%
%
Evidently, \(\Delta_i^\mathrm{tr} \in \mathbf{\Delta}_i^\mathrm{learned}\). We define a distributed multiplier class by:

\begin{equation}
    \mathbf{\Tilde{P}}_{d,i} := \begin{bmatrix}
        -Z_i & 0 \\ M_i & B_{d,i}
    \end{bmatrix} \mathbf{P}_{d,i} \begin{bmatrix}
        -Z_i & 0 \\ M_i & B_{d,i}
    \end{bmatrix}^\top,
\end{equation}

where \(\mathbf{P}_{d,i}\) is given by \eqref{eqn:P_d_i_def_disturbance_full_block_quad}. 


We now define a multiplier class for our prior knowledge given in Assumption \ref{assum:prior_knowledge_set}. The inequality \(\Delta_i \Delta_i^\top \preceq \overline{\delta}_i I\) may be written as a quadratic form, which leads to the multiplier class for the transformed uncertainty \(\Tilde{\Delta}_i := B_{w,i} \Delta_i\):

\begin{equation*}
    \mathbf{\Tilde{P}}_{\mathrm{pr},i} := \left\{ P_{\mathrm{pr},i} \, \left| \, \lambda_i \begin{bmatrix}
        I & 0 \\ 0 & B_{w,i}
    \end{bmatrix} \begin{bmatrix}
        -I & 0 \\ 0 & \overline{\delta}_i
    \end{bmatrix} \begin{bmatrix}
        I & 0 \\ 0 & B_{w,i}^\top
    \end{bmatrix}, \, \lambda_i \geq 0 \right. \right\}.
\end{equation*}

%
%

%
%

Under both Assumptions \ref{assum:disturbance_bound} and \ref{assum:prior_knowledge_set}, by \cite{Berberich-et-al-2023-Prior-Knowledge-Data-Robust-Design} Lemma 3, the data-driven and prior knowledge-based multipliers can be combined exactly. 
A transformed combined multiplier set is defined by \(\mathbf{\Tilde{P}}_{\mathrm{com},i} := \mathbf{\Tilde{P}}_{d,i} \oplus \mathbf{\Tilde{P}}_{\mathrm{pr},i}\), with the uncertainty set \(\mathbf{\Tilde{\Delta}}_{i}^\mathrm{com}\) given by:

\begin{equation*}
    \mathbf{\Tilde{\Delta}}_{i}^\mathrm{com} = \left\{ \Tilde{\Delta}_i \left| \begin{bmatrix}
        \Tilde{\Delta}_i^\top \\ I
    \end{bmatrix}^\top \! \! \Tilde{P}_{\mathrm{com},i} \begin{bmatrix}
        \Tilde{\Delta}_{i}^\top \\ I
    \end{bmatrix} \succeq 0 \, \forall \Tilde{P}_{\mathrm{com},i} \in \mathbf{\Tilde{P}}_{\mathrm{com},i} \right. \right\}.
\end{equation*}

In the following analysis, we will work with uncertainty blocks of the form \(\begin{bmatrix}
    I & \Delta_i
\end{bmatrix}\) and its transpose. Therefore, we define a mirrored transformed combined multiplier:

\begin{equation}
    \mathbf{\overline{P}}_{\mathrm{com},i} := \begin{bmatrix}
            0 & I \\ I & 0
        \end{bmatrix}^\top \mathbf{\Tilde{P}}_{\mathrm{com},i} \begin{bmatrix}
            0 & I \\ I & 0
        \end{bmatrix}.
\end{equation}


\subsection{Distributed Controller Synthesis}

Having combined our prior knowledge and information learned from data in a compact manner, we are now ready to formulate a Linear Matrix Inequality (LMI) design condition to find a distributed \(\mathcal{H}_2\) controller in a fashion that is amenable to distributed optimization.

\begin{thm}
\label{thm:distributed_synthesis_grey_box}
    Suppose that for all \(i \in \mathcal{V}\), there exist \(E_i \succ 0, E_{\mathcal{N}_i} \succ 0, \Phi_i \succ 0\), \(\overline{P}_{\mathrm{com},i} \in \mathbf{\overline{P}}_{\mathrm{com},i}\), \(\gamma_i > 0\), \(S_{\mathcal{N}_i}\), \(F_{\mathcal{N}_i}\), \(Y_{\mathcal{N}_i}\) such that:

    {\allowdisplaybreaks
    \begin{subequations}
    \label{eqn:distributed_synthesis_LMIs}
        \begin{align}
        \label{eqn:H2_performance_bound}
            \mathrm{trace}(\Phi_i) &< \gamma_i^2 \\
            \label{eqn:H2_performance_LMI}
            \begin{bmatrix}
                \Phi_i & B_{d,i}^\top \\
                * & E_{_i}
            \end{bmatrix} &\succ 0
        \end{align}
        \begin{align}
            \label{eqn:distributed_stability_LMI}
            \begin{bmatrix}
                \overline{P}_{\mathrm{com},i} + \begin{bmatrix}
                    - E_i & 0 \\ 0 & 0
                \end{bmatrix} & * & * \vspace{1mm} \\
                \begin{bmatrix}
                    A_{\mathcal{N}_i}' E_{\mathcal{N}_i} + B_i' Y_{\mathcal{N}_i} \\
                    C_{z,\mathcal{N}_i} E_{\mathcal{N}_i} + D_{z,i} Y_{\mathcal{N},i}
                \end{bmatrix}^\top & -\overline{E} - F_{\mathcal{N}_i} & * \\
                0 & C_{y,\mathcal{N}_i} E_{\mathcal{N}_i} + D_{y,i} Y_{\mathcal{N}_i} & -I
            \end{bmatrix} \prec 0
        \end{align}
        \begin{align}
            \label{eqn:F_less_than_S}
            F_{\mathcal{N}_i} &\preceq S_{\mathcal{N}_i} \\
            \label{eqn:neighbourhood_S_coupling}
            \sum_{j \in \mathcal{N}_i} T_i W_j^\top S_{\mathcal{N}_j} W_j T_i^\top &\preceq 0
        \end{align}
    \end{subequations}
    }
    
    where \(E_{\mathcal{N}_i} := \mathrm{diag}(E_j)_{j \in \mathcal{N}_i}\), \(\overline{E}_i := W_i T_i^\top E_i T_i W_i^\top\), and \(S_{\mathcal{N}_i}\) is block-diagonal for all \(i \in \mathcal{V}\). Then the conditions \eqref{eqn:distributed_H2_global_conditions} hold with \(P_i = E_i^{-1}\), \(K_{\mathcal{N}_i} = Y_{\mathcal{N}_i} E_{\mathcal{N}_i}^{-1}\), \(\Theta_{\mathcal{N}_i} = E_{\mathcal{N}_i}^{-1} F_{\mathcal{N}_i} E_{\mathcal{N}_i}^{-1}\).
\end{thm}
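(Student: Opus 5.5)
The plan is to verify the three groups of conditions in \eqref{eqn:distributed_H2_global_conditions} one at a time, using the substitutions \(P_i = E_i^{-1}\), \(K_{\mathcal{N}_i} = Y_{\mathcal{N}_i} E_{\mathcal{N}_i}^{-1}\) and \(\Theta_{\mathcal{N}_i} = E_{\mathcal{N}_i}^{-1} F_{\mathcal{N}_i} E_{\mathcal{N}_i}^{-1}\).

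\emph{Trace bound.} Apply a Schur complement to \eqref{eqn:H2_performance_LMI}. Since \(E_i \succ 0\), it gives \(\Phi_i \succ B_{d,i}^\top E_i^{-1} B_{d,i}\). Taking traces and using \eqref{eqn:H2_performance_bound} yields \(\mathrm{trace}(B_{d,i}^\top P_i B_{d,i}) < \mathrm{trace}(\Phi_i) < \gamma_i^2\), which is \eqref{eqn:H2_performance_bound_true}.

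\emph{Coupling inequality.} Congruence preserves semidefiniteness, so \eqref{eqn:F_less_than_S} gives \(\Theta_{\mathcal{N}_i} \preceq E_{\mathcal{N}_i}^{-1} S_{\mathcal{N}_i} E_{\mathcal{N}_i}^{-1}\). Hence \(\sum_i W_i^\top \Theta_{\mathcal{N}_i} W_i \preceq \sum_i W_i^\top E_{\mathcal{N}_i}^{-1} S_{\mathcal{N}_i} E_{\mathcal{N}_i}^{-1} W_i\).
\begin{itemize}
\item Because \(E_{\mathcal{N}_i}\) is block-diagonal with blocks \(E_j\) ordered as in \(W_i\), we have \(W_i^\top E_{\mathcal{N}_i}^{-1} = E^{-1} W_i^\top\), where \(E=\mathrm{diag}(E_j)\). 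The right-hand side therefore equals \(E^{-1}\big(\sum_i W_i^\top S_{\mathcal{N}_i} W_i\big)E^{-1}\).
\item Each \(S_{\mathcal{N}_i}\) is block-diagonal, so \(\sum_i W_i^\top S_{\mathcal{N}_i} W_i\) is block-diagonal in the global state partition.
\item The diagonal block belonging to subsystem \(l\) is \(\sum_{i} T_l W_i^\top S_{\mathcal{N}_i} W_i T_l^\top\). Only \(i\) with \(l\in\mathcal{N}_i\) contribute, which are exactly the terms of \eqref{eqn:neighbourhood_S_coupling} after re-indexing.
\end{itemize}
Here I expect some care with the index convention, since \(\mathcal{N}_i\) is in-neighbors and the sum in \eqref{eqn:neighbourhood_S_coupling} is written over \(j\in\mathcal{N}_i\). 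I would check that \(T_iW_j^\top\) vanishes unless \(i\in\mathcal{N}_j\), and interpret the sum accordingly. Each block is then \(\preceq 0\), so the whole sum is negative semidefinite and \eqref{eqn:relaxation_inequality_global} follows.

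\emph{Lyapunov inequality (main obstacle).} This step follows the robust-control argument of \cite{Berberich-et-al-2023-Prior-Knowledge-Data-Robust-Design}.
\begin{itemize}
\item First Schur-complement the \(-I\) block in \eqref{eqn:distributed_stability_LMI}. This adds the term \((C_{y,\mathcal{N}_i}E_{\mathcal{N}_i}+D_{y,i}Y_{\mathcal{N}_i})^\top(\cdot)\) to the \((2,2)\) block.
\item The remaining \(2\times 2\) block matrix has the form \(\begin{bmatrix}\overline{P}_{\mathrm{com},i} + \mathrm{diag}(-E_i,0) & \mathcal{A}^\top \\ \mathcal{A} & -\overline{E}_i - F_{\mathcal{N}_i} + \mathcal{C}^\top\mathcal{C}\end{bmatrix}\prec 0\), with \(\mathcal{A}^\top = \begin{bmatrix} A'E+B'Y \\ C_zE+D_zY\end{bmatrix}\).
\item Apply the dualized full-block S-procedure, i.e.\ the dual form of Lemma 3 / Theorem in \cite{Berberich-et-al-2023-Prior-Knowledge-Data-Robust-Design}. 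Multiplying the matrix by \(\begin{bmatrix} [I \ \ \tilde\Delta_i]^\top & 0\\ 0 & I\end{bmatrix}\)-type annihilators and using that every \(\tilde\Delta_i\in\tilde{\mathbf{\Delta}}_i^{\mathrm{com}}\) satisfies the mirrored multiplier inequality, the term \(\overline{P}_{\mathrm{com},i}\) can be dropped.
\item Eliminating the uncertainty channel this way and taking a Schur complement in \(E_i\) should leave
\[
(A_{\mathcal{N}_i}E_{\mathcal{N}_i}+B_iY_{\mathcal{N}_i})^\top E_i^{-1}(\cdot) - \overline{E}_i - F_{\mathcal{N}_i} + \mathcal{C}^\top\mathcal{C}\prec 0 ,
\]
which holds for all \(\Delta_i\) in the combined set, and hence for \(\Delta_i^{\mathrm{tr}}\), so that \(A_{\mathcal{N}_i}=A'_{\mathcal{N}_i}+B_{w,i}\Delta_i^{\mathrm{tr}}C_{z,\mathcal{N}_i}\) and \(B_i = B_i'+B_{w,i}\Delta_i^{\mathrm{tr}}D_{z,i}\).
\end{itemize}
Getting the sign, ordering and mirroring of the multiplier exactly right is the delicate part. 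I would follow the centralized proof verbatim with \(E\) replaced by the pair \((E_i,E_{\mathcal{N}_i})\).

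Finally, apply the congruence \(E_{\mathcal{N}_i}^{-1}\) and substitute \(Y_{\mathcal{N}_i}=K_{\mathcal{N}_i}E_{\mathcal{N}_i}\). The identity \(E_{\mathcal{N}_i}^{-1}\overline{E}_iE_{\mathcal{N}_i}^{-1} = W_iT_i^\top E_i^{-1}T_iW_i^\top = \overline{P}_i\) holds because \(T_iW_i^\top\) selects the \(i\)-block of the block-diagonal \(E_{\mathcal{N}_i}\). This yields exactly \eqref{eqn:Lyap_inequality_neighbourhood}, with \(\Theta_{\mathcal{N}_i}\) moved to the right-hand side. The previous Lemma then gives stability and the \(\mathcal{H}_2\) bound.
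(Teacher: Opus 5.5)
Your proposal is correct and follows the paper's proof in substance. The trace bound comes from the same Schur complement. The coupling step is the block-diagonal $S_{\mathcal{N}_i}$ argument that the paper only cites from Conte et al.; you spell it out via $W_i^\top E_{\mathcal{N}_i}^{-1}=E^{-1}W_i^\top$. The Lyapunov step uses the same congruence with $[I\ \ \tilde\Delta_i]$ together with the multiplier inequality. You apply it before rather than after Schur-complementing the $(2,2)$ block, which spares you the paper's $\mathcal{M}_{\mathcal{N}_i}^{-1}$ and its double Schur complement.

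Your caution about the index convention is justified. As literally written, \eqref{eqn:neighbourhood_S_coupling} reproduces the $i$-th diagonal block of $\sum_j W_j^\top S_{\mathcal{N}_j} W_j$ only when the physical coupling is symmetric, because terms with $i\in\mathcal{N}_j$ but $j\notin\mathcal{N}_i$ are missing. So summing over $\{j : i\in\mathcal{N}_j\}$, or assuming symmetric coupling as in the power-system example, is genuinely needed.
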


\begin{pf}
     Applying the Schur complement to the lower-right block of \eqref{eqn:distributed_stability_LMI} and pre- and post-multiplying \eqref{eqn:distributed_stability_LMI} by \(\mathrm{diag}(I,E_{\mathcal{N}_i}^{-1})\) results in

    \begin{equation}
    \label{eqn:stability_LMI_E_Ni_multiply}
        \begin{bmatrix}
            \overline{P}_{\mathrm{com},i} + \begin{bmatrix}
                    - E_i & 0 \\ 0 & 0
                \end{bmatrix} & * \\
                \begin{bmatrix}
                    A_{\mathcal{N}_i}' + B_i' K_{\mathcal{N}_i} \\
                    C_{z,\mathcal{N}_i} + D_{z,i} K_{\mathcal{N}_i}
                \end{bmatrix}^\top & -\mathcal{M}_{\mathcal{N}_i}
        \end{bmatrix} \prec 0,
    \end{equation}

    where \(\mathcal{M}_{\mathcal{N}_i} = \Theta_{\mathcal{N}_i} + \overline{P}_i - (C_{y,\mathcal{N}_i} + D_{y,i} K_{\mathcal{N}_i})^\top (C_{y,\mathcal{N}_i} + D_{y,i} K_{\mathcal{N}_i})\) and \(\overline{P}_i = W_i T_i^\top P_i T_i W_i^\top\). Note that \(E_{\mathcal{N}_i} = W_i E W_i^\top\), where \(E = \mathrm{diag}(E_i)_{i=1:M}\). Applying the Schur complement to the lower-right block of \eqref{eqn:stability_LMI_E_Ni_multiply} and re-arranging terms leads to

    \begin{equation}
    \label{eqn:stability_LMI_Schur}
        \begin{aligned}
            \overline{P}_{\mathrm{com},i} + \begin{bmatrix}
            I & A_{\mathcal{N}_i}' + B_i' K_{\mathcal{N}_i} \\ 0 & C_{z,\mathcal{N}_i} + D_{z,i} K_{\mathcal{N}_i}
        \end{bmatrix} \qquad \qquad \qquad & \\ \cdot \begin{bmatrix}
            - E_i & 0 \\ 0 & \mathcal{M}_{\mathcal{N}_i}^{-1}
        \end{bmatrix} \begin{bmatrix}
            *
        \end{bmatrix} &\prec 0.
        \end{aligned}
    \end{equation}

    By the full-block S-procedure (\cite{Scherer-2001-LPV-Control-Full-Block-Multipliers}), and for dynamics such that \(A_{\mathcal{N}_i} + B_i K_{\mathcal{N}_i} = A_{\mathcal{N}_i}' + B_i' K_{\mathcal{N}_i} + \Delta_i (C_{z,\mathcal{N}_i} + D_{z,i} K_{\mathcal{N}_i})\), \eqref{eqn:stability_LMI_Schur} implies that for all \(\Delta_i\) such that \(B_{w,i} \Delta_i \in \mathbf{\Tilde{\Delta}}_{\mathrm{com},i}\): \((A_{\mathcal{N}_i} + B_i K_{\mathcal{N}_i}) \mathcal{M}_{\mathcal{N}_i}^{-1} (*) - E_i \prec 0.\)
%
%
    Applying the Schur complement twice returns \eqref{eqn:Lyap_inequality_neighbourhood}.
%
%
    Since \(\mathcal{S}_{\mathcal{N}_i}\) is block-diagonal, \eqref{eqn:F_less_than_S} and \eqref{eqn:neighbourhood_S_coupling} are sufficient conditions for \(\sum_{i=1:M} W_i^\top F_{\mathcal{N}_i} W_i \preceq 0\) (\cite{Conte-2016-Distributed-Synthesis-Stability-Cooperative-Distributed-MPC-Linear}). This constraint is equivalent to \eqref{eqn:relaxation_inequality_global}--see the proof of \cite{Conte-et-al-2012-Distributed-Synthesis-Control-Constrained-Linear-Sys} Theorem IV.3 for details. Applying the Schur complement to \eqref{eqn:H2_performance_LMI} and considering \eqref{eqn:H2_performance_bound} gives that \(\mathrm{trace}(B_{d,i}^\top P_i B_{d,i}^\top) < \gamma_i^2\),
%
%
    which is \eqref{eqn:H2_performance_bound_true}. We have recovered the conditions \eqref{eqn:distributed_H2_global_conditions}, thus completing the proof.
    \qed
\end{pf}

The set of LMIs \eqref{eqn:distributed_synthesis_LMIs} depend only on local and neighborhood variables. We have eliminated the globally coupled constraint \eqref{eqn:relaxation_inequality_global}, and have formulated a convex design problem. The controllers can now be synthesized in a distributed fashion using distributed optimization techniques.

\subsection{Distributed Optimization Formulation}
\label{subsection:distributed_optimization}

Each agent \(i\)'s controller synthesis problem \eqref{eqn:distributed_synthesis_LMIs} depends on local variables and their neighbor \(j\)'s variables; \(E_j\) via \(E_{\mathcal{N}_i}\) and \(S_{\mathcal{N}_j}\) via \eqref{eqn:neighbourhood_S_coupling}. In order to optimize global performance, we aim to minimize \(\gamma\). Since \(\gamma^2 = \sum_{i=1:M} \gamma_i^2\), we have a separable objective. The goal of minimizing \(\gamma\) is equivalent to the goal of minimizing \(\sum_{i=1:M} \mathrm{trace}(\Phi_i)\). The global problem can be written as:

\begin{subequations}
\label{eqn:global_optimization_problem}
    \begin{align}
        \min_{\{\Phi_i,E_i \succ 0, \overline{P}_{\mathrm{com},i} \in \mathbf{\overline{P}}_{\mathrm{com},i},S_{\mathcal{N}_i}, F_{\mathcal{N}_i},Y_{\mathcal{N}_i}\}_{i \in \mathcal{V}}} & \sum_{i \in \mathcal{V}} \mathrm{trace}(\Phi_i) \\
        \mathrm{s.t.} \ \eqref{eqn:H2_performance_LMI} - \eqref{eqn:neighbourhood_S_coupling} & \ \forall i \in \mathcal{V}
    \end{align}
\end{subequations}

By introducing local copies of the neighbor variables, denoted \(E_j^{(i)}\) and \(S_{\mathcal{N}_j}^{(i)}\) for agent \(i\)'s copies of agent \(j\)'s variables, we can formulate a distributed optimization problem, which we solve using ADMM (\cite{Boyd-2011-ADMM}) to reach consensus between neighboring agents' variables. Note that \(E_i^{(i)}\) and \(S_{\mathcal{N}_i}^{(i)}\) denote agent \(i\)'s local variables. We use an edge-based ADMM formulation that avoids the need for a central server, enabling a fully distributed implementation -- the method is given in \cite[Chapter~11.2]{Ryu_Yin_2022-LSCO}, which we refer to for details. To ease notation, we define \(\Xi_i := \{\Phi_i, \{E_j^{(i)}\}_{j \in \mathcal{N}_i}, \overline{P}_{\mathrm{com},i}, \{S_{\mathcal{N}_j}^{(i)}\}_{j \in \mathcal{N}_i}, F_{\mathcal{N}_i}, Y_{\mathcal{N}_i}\}\) as the set of agent \(i\)'s decision variables. We define \(\mathcal{C}_i := \{\Xi_i : \eqref{eqn:H2_performance_LMI} - \eqref{eqn:neighbourhood_S_coupling} \ \mathrm{are \ satisfied}\}\) as a convex set of variables satisfying the LMI constraints, where we replace \(S_{\mathcal{N}_j}\) and \(E_j\) with \(S_{\mathcal{N}_j}^{(i)}\) and \(E_j^{(i)}\), respectively, in \eqref{eqn:H2_performance_LMI}--\eqref{eqn:neighbourhood_S_coupling}. We define \(\iota_{\mathcal{C}_i}(\Xi_i)\) as the indicator function of \(\mathcal{C}_i\), such that \(\iota_{\mathcal{C}_i}(\Xi_i) = 0\) if \(\Xi_i \in \mathcal{C}_i\), \(\iota_{\mathcal{C}_i}(\Xi_i) = \infty\) otherwise. Introducing edge variables \(\nu_{\{i,j\}} \ \forall \{i,j\} \in \mathcal{E}_C\), we can write the global optimization problem \eqref{eqn:global_optimization_problem} as a distributed consensus problem:

\begin{subequations}
    \begin{align}
        &\min_{\{\Xi_i\}_{i \in \mathcal{V}}, \{\nu_{\{i,j\}}\}_{\{i,j\} \in \mathcal{E}_C}} \sum_{i \in \mathcal{V}} (\mathrm{trace}(\Phi_i) + \iota_{\mathcal{C}_i}(\Xi_i)) \\
        \label{eqn:consensus_edge_constraints}
        & \mathrm{s.t.} \  \begin{cases}
            h_{ij}(\Xi_i) - \nu_{\{i,j\}} = 0 \\
            h_{ij}(\Xi_j) - \nu_{\{i,j\}} = 0
        \end{cases} \ \ \forall \{i,j\} \in \mathcal{E}_C,
    \end{align}
\end{subequations}

where we define: 

\begin{equation*}
    h_{ij}(\Xi_i) := \begin{bmatrix}
        \mathrm{vect} \left( E_i^{(i)} \right) \\
        \mathrm{vect} \left( E_j^{(i)} \right) \\
        \mathrm{vect} \left( S_{\mathcal{N}_i}^{(i)} \right) \\
        \mathrm{vect} \left( S_{\mathcal{N}_j}^{(i)} \right)
    \end{bmatrix}, \ h_{ij}(\Xi_j) := \begin{bmatrix}
        \mathrm{vect} \left( E_i^{(j)} \right) \\
        \mathrm{vect} \left( E_j^{(j)} \right) \\
        \mathrm{vect} \left( S_{\mathcal{N}_i}^{(j)} \right) \\
        \mathrm{vect} \left( S_{\mathcal{N}_j}^{(j)} \right)
    \end{bmatrix}.
\end{equation*}

By forming an augmented Lagrangian, we arrive at the following ADMM iterations (\cite{Ryu_Yin_2022-LSCO}):

\allowdisplaybreaks{
\begin{subequations}
\label{eqn:ADMM}
    \begin{align}
    \label{eqn:ADMM_x_update}
        &\begin{aligned}
            \Xi_i^{t+1} = &\argmin_{\Xi_i} \, \mathrm{trace}(\Phi_i) + \iota_{\mathcal{C}_i}(\Xi_i) \\ & + \sum_{j \in \mathcal{N}_i \backslash i} \bigg( \mu_{\{i,j\},i}^{t^\top} \left(h_{ij}(\Xi_i) - \nu_{\{i,j\}}^t \right) \\ & + \frac{\alpha}{2} \left\lVert h_{ij}(\Xi_i) - \nu_{\{i,j\}}^t \right\lVert_2^2 \bigg) \ \forall i \in \mathcal{V}
        \end{aligned} \\
        \label{eqn:ADMM_y_update}
        &\begin{aligned}
            \nu_{\{i,j\}}^{t+1} = & \argmin_{\nu_{\{i,j\}}} \sum_{q = i,j} \bigg( \mu_{\{i,j\},q}^{t^\top} \left( h_{ij}(\Xi_q^{t+1}) - \nu_{\{i,j\}} \right) \\ & + \frac{\alpha}{2} \left\lVert h_{ij}(\Xi_q^{t+1}) - y_{\{i,j\}} \right\rVert \ \forall \{i,j\} \in \mathcal{E}_C
        \end{aligned}  \\
        \label{eqn:ADMM_u_update}
        &\begin{aligned}
            \mu_{\{i,j\},q}^{t+1} = \mu_{\{i,j\},q}^t + \alpha (& h_{ij}(\Xi_i^{t+1}) - \nu_{\{i,j\}}^{t+1} ) \\ & \forall \{i,j\} \in \mathcal{E}_C, \, q = i,j,
        \end{aligned}
    \end{align}
\end{subequations}
}

where \(\mu_{\{i,j\},i}\) is agent \(i\)'s dual variable associated with the edge \(\{i,j\}\), \(\alpha\) is the ADMM penalty parameter, and \(t\) is the ADMM iteration number. First, the agents solve \eqref{eqn:ADMM_x_update} in parallel. Secondly, the update \eqref{eqn:ADMM_y_update}--\eqref{eqn:ADMM_u_update} is implemented in parallel.

\begin{thm}
\label{thm:distributed_opt}
    The solution of the distributed optimization iterations \eqref{eqn:ADMM} asymptotically converges to the solution of the global problem \eqref{eqn:global_optimization_problem}. A convergent solution will satisfy \eqref{eqn:distributed_synthesis_LMIs} for all \(i \in \mathcal{V}\) in a consistent manner such that \(E_i = E_i^{(j)}\), \(S_{\mathcal{N}_i} = S_{\mathcal{N}_i}^{(j)}\) \(\forall j \in \mathcal{N}_i\).
\end{thm}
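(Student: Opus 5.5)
Our plan is to show that the edge-based consensus problem is an exact reformulation of \eqref{eqn:global_optimization_problem}, and then to invoke the standard convergence theory for two-block ADMM on convex problems. The first step is to write the consensus problem in the canonical form \(\min_{\Xi,\nu} f(\Xi) + g(\nu)\) subject to \(\mathcal{A}\Xi - \mathcal{B}\nu = 0\). Here \(\Xi = \{\Xi_i\}_{i \in \mathcal{V}}\) and \(f(\Xi) = \sum_{i}(\mathrm{trace}(\Phi_i) + \iota_{\mathcal{C}_i}(\Xi_i))\). We take \(g \equiv 0\), \(\mathcal{A}\) stacks the linear selection maps \(h_{ij}\), and \(\mathcal{B}\) duplicates each \(\nu_{\{i,j\}}\) twice. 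Each \(\mathcal{C}_i\) is convex, being defined by the LMIs \eqref{eqn:H2_performance_LMI}--\eqref{eqn:neighbourhood_S_coupling} that are affine in the decision variables. Hence \(f\) is closed, proper and convex, provided we replace the strict inequalities by \(\succeq \epsilon I\)-type closed versions, as is standard in numerical LMI solvers. This closure is a detail we would state explicitly. Finally, \(\mathcal{B}\) has full column rank, so the \(\nu\)-update \eqref{eqn:ADMM_y_update} is well defined: it is simply the average of the two copies shifted by the duals.

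Second, we would prove equivalence of the problems. The key fact is that \(\mathcal{G}_C\) contains an edge \(\{i,j\}\) for every \(j \in \mathcal{N}_i\setminus\{i\}\). The edge constraints \eqref{eqn:consensus_edge_constraints} therefore force \(E_j^{(i)} = \nu = E_j^{(j)}\) and \(S_{\mathcal{N}_j}^{(i)} = S_{\mathcal{N}_j}^{(j)}\) for all such pairs. Consequently, any feasible point of the consensus problem defines a single global tuple \((E_i, S_{\mathcal{N}_i}, \dots)\) that satisfies \eqref{eqn:distributed_synthesis_LMIs} for every \(i\), with the same objective value. Conversely, any feasible point of \eqref{eqn:global_optimization_problem} lifts to a consensus-feasible point by copying the variables. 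The optimal values and minimizers thus correspond, which also gives the consistency claim \(E_i = E_i^{(j)}\), \(S_{\mathcal{N}_i} = S_{\mathcal{N}_i}^{(j)}\) in the statement.

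Third, we would apply the ADMM convergence theorem (\cite{Boyd-2011-ADMM}, or \cite[Chapter~11]{Ryu_Yin_2022-LSCO}). That result requires \(f,g\) closed, proper and convex, and the unaugmented Lagrangian to have a saddle point. Under these conditions the residuals \(\mathcal{A}\Xi^t - \mathcal{B}\nu^t \to 0\), the objective converges to the optimal value, and the duals converge. Combined with the previous step, any limit point is a minimizer of \eqref{eqn:global_optimization_problem} satisfying \eqref{eqn:distributed_synthesis_LMIs} with consistent copies.

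We expect the main obstacle to be establishing existence of a saddle point, i.e.\ strong duality for the consensus problem. Our plan is to assume the global LMI problem is strictly feasible, which is the natural standing assumption under which Theorem \ref{thm:distributed_synthesis_grey_box} is applied. We would then note that the coupling constraints are linear equalities, so Slater's condition holds for the relative interior of \(\prod_i \mathcal{C}_i\) intersected with an affine subspace. This gives zero duality gap and attainment of dual optima. Attainment of the primal optimum is also delicate, because the objective \(\mathrm{trace}(\Phi_i)\) with open LMIs may only have an infimum. We would handle this through the closed \(\epsilon\)-margin reformulation, and otherwise state convergence in objective value and residual only.
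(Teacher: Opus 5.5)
Your proposal is correct and follows the same route as the paper. The paper's proof simply invokes standard ADMM convergence from \cite{Ryu_Yin_2022-LSCO} and reads the consistency \(E_i^{(i)} = E_i^{(j)}\), \(S_{\mathcal{N}_i}^{(i)} = S_{\mathcal{N}_i}^{(j)}\) directly off the edge constraints \eqref{eqn:consensus_edge_constraints}. Your extra care makes explicit the hypotheses that the paper's one-line citation leaves implicit: the \(\epsilon\)-margin closure of the strict LMIs (which is also what guarantees a limit point still satisfies \eqref{eqn:distributed_synthesis_LMIs}), the Slater argument for a saddle point, and the caveat on primal attainment.
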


\begin{pf}
    The asymptotic convergence of \eqref{eqn:ADMM} to the solution of \eqref{eqn:global_optimization_problem} is guaranteed by the properties of ADMM (\cite{Ryu_Yin_2022-LSCO}). A convergent solution is guaranteed to satisfy the constraints \eqref{eqn:consensus_edge_constraints}, such that \(E_i^{(i)} = E_i^{(j)}\), \(E_j^{(i)} = E_j^{(j)}\), \(S_{\mathcal{N}_i}^{(i)} = S_{\mathcal{N}_i}^{(j)}\), and \(S_{\mathcal{N}_j}^{(i)} = E_{\mathcal{N}_j}^{(j)}\) if \(\{i,j\} \in \mathcal{E}_C\).
    \qed
\end{pf}

\section{Numerical Experiments}
\label{section:simulations}

Consider the discretized and linearized swing equations as given in the example of \cite{Alonso-2022-Data-Driven-Distributed-Localized-MPC}, which are used to model small-signal frequency dynamics in electric power transmission systems. We take \(M = 39\) and set the interconnection structure \(\mathcal{G}_P\) according to the IEEE 39-bus power system benchmark case, which is shown in Figure \ref{fig:39-bus-system}.
\begin{figure}[ht]
    \centering
    \includegraphics[width=0.75\linewidth]{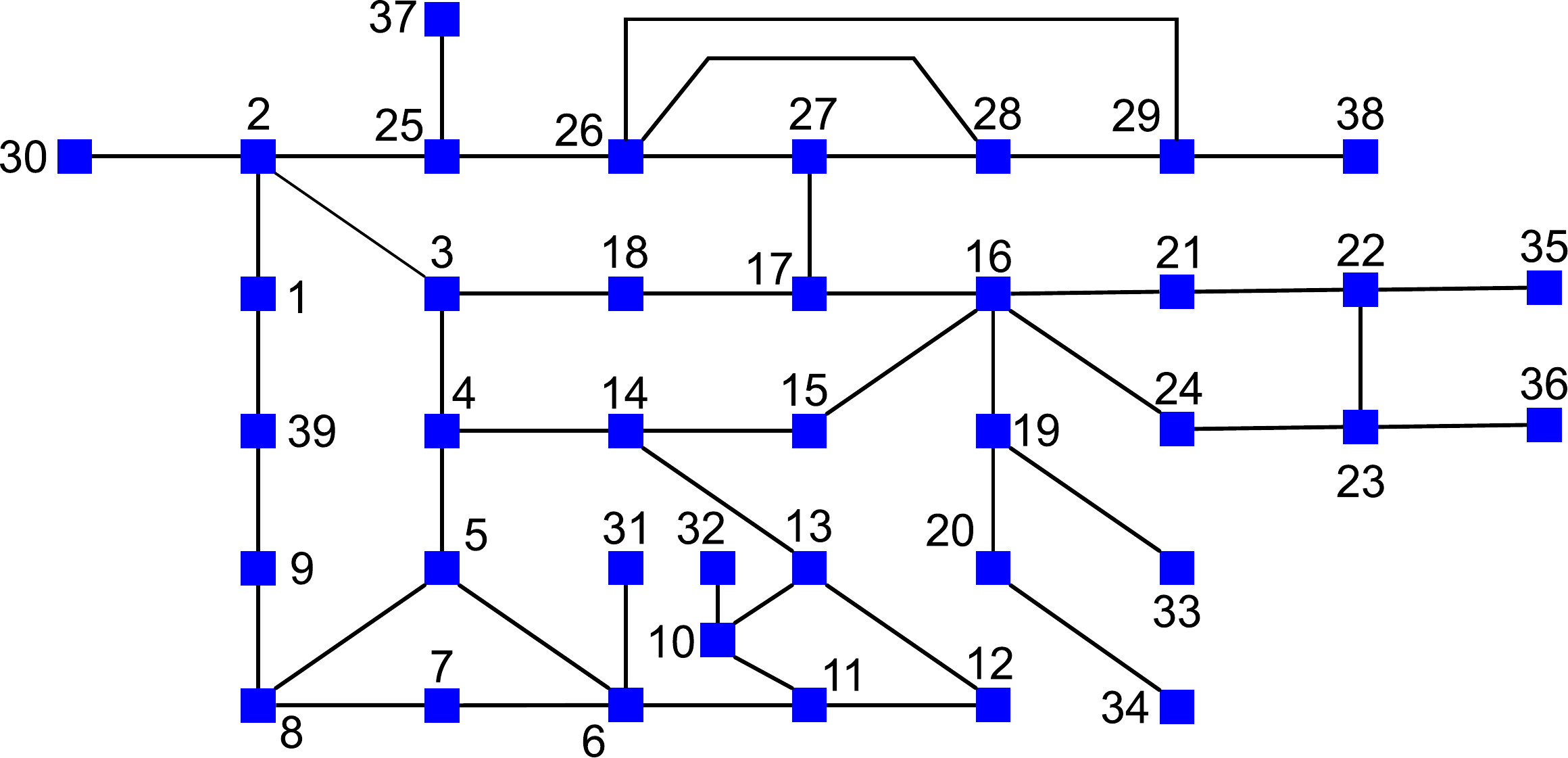}
    \caption{The IEEE 39-bus test system topology}
    \label{fig:39-bus-system}
\end{figure}
The neighborhood dynamics \eqref{eqn:subsys_dynamics} for subsystem \(i\) are written as:

\allowdisplaybreaks{
    \begin{align*}
        \begin{bmatrix}
            \theta_i^{k+1} \\ \omega_i^{k+1}
        \end{bmatrix} &= A_{ii} \begin{bmatrix}
            \theta_i^{k} \\ \omega_i^{k}
        \end{bmatrix} + \sum_{j \in \mathcal{N}_i \backslash i} A_{ij} \begin{bmatrix}
            \theta_j^k \\ \omega_j^k
        \end{bmatrix} + B_i u_i^k + B_{d,i} d_i^k \\
        y_i^k &= \begin{bmatrix}
            Q_r \\ 0
        \end{bmatrix} x_i^k + \begin{bmatrix}
            0 \\ R_r
        \end{bmatrix} u_i^k,
    \end{align*}
}

where {\renewcommand{\arraystretch}{1} \(A_{ii} = \begin{bmatrix}
    1 & \Delta t \\ -\frac{k_i}{m_i} \Delta t & 1 - \frac{c_i}{m_i} \Delta t
\end{bmatrix}\), \(A_{ij} = \begin{bmatrix}
    0 & 0 \\ \frac{k_{ij}}{m_i} \Delta t & 0
\end{bmatrix} \\ \forall j \in \mathcal{N}_i \backslash i\), \(B_i = B_{d,i} = \begin{bmatrix}
    0 & \frac{\Delta t}{m_i}
\end{bmatrix}^\top\)}, \(\Delta t = \qty{0.2}{\second}\) is the discretization interval, \(k_i = \sum_{j \in \mathcal{N}_i \backslash i} k_{ij}\), and we have chosen a linear quadratic performance criterion. The values of \(m_i,c_i,k_{ij}\) are chosen randomly and uniformly in the ranges \([0,2]\), \([0.5,1]\), and \([1,1.5]\), respectively. We simulate a disturbance \(d_i^k \sim \mathcal{U}[-\epsilon_i,\epsilon_i]\), which is compatible with a norm bound on the unknown noise sequence in the dataset \(D_i^\mathrm{tr}\), such that \(\sum_{k=0}^{N-1} \lVert _dd_i^k \rVert_2^2 \leq \overline{d}_i\), where \(\overline{d}_i = N n_{d,i} \epsilon_i^2\) is a worst-case bound. We accordingly form the matrix \(P_{d,i}\) in Assumption \ref{assum:disturbance_bound} with \(Q_{d,i} = -I\), \(R_{d,i} = \overline{d}_i I\), and \(S_{d,i} = 0\) (\cite{Berberich-et-al-2023-Prior-Knowledge-Data-Robust-Design}). We choose \(\overline{\delta}_i \ \forall i\) to be 20\% higher than the true \(\lVert \Delta_i \rVert_2^2\). The dataset is collected by sampling \(_du^k_i \sim \mathcal{U}[-1,1]\) with \(N = 400\). We compare five methods of controller design:

\begin{enumerate}[(i)]
    \item  Distributed black-box synthesis of a distributed controller using Theorems \ref{thm:distributed_synthesis_grey_box} and \ref{thm:distributed_opt}.
    \item Distributed gray-box synthesis of a distributed controller using Theorems \ref{thm:distributed_synthesis_grey_box} and \ref{thm:distributed_opt}, where we assume local dynamics \(A_{ii}\), \(B_i\) and \(B_{d,i}\) are known.
    \item Centralized black-box design of a centralized controller using \cite{Berberich-et-al-2023-Prior-Knowledge-Data-Robust-Design}.
    \item Centralized gray-box design of a centralized controller using \cite{Berberich-et-al-2023-Prior-Knowledge-Data-Robust-Design} with a block-diagonal \(\Delta\).
    \item Benchmark model-based \(\mathcal{H}_2\) synthesis assuming perfect model knowledge (\cite{Scherer-Weiland-2011-LMIs-in-Control}).
\end{enumerate}

All optimization problems are solved using the Mosek solver (\cite{mosek}) with the Yalmip toolbox (\cite{Lofberg-2004-YALMIP}) in MATLAB. Note that in the black-box case (i), \(\overline{d}_i\) must be scaled by \(\lVert B_{d,i} \rVert_2^2\).
We compare the closed-loop performance of controllers synthesized by each design method as the noise bound \(\overline{d}_i\) increases in Figure \ref{fig:noisy_results}, with a \(\log x\) axis. We use Theorem \ref{thm:distributed_synthesis_grey_box} for methods (i) and (ii).
\begin{figure}[h]
    \centering
    \includegraphics[width=0.85\linewidth]{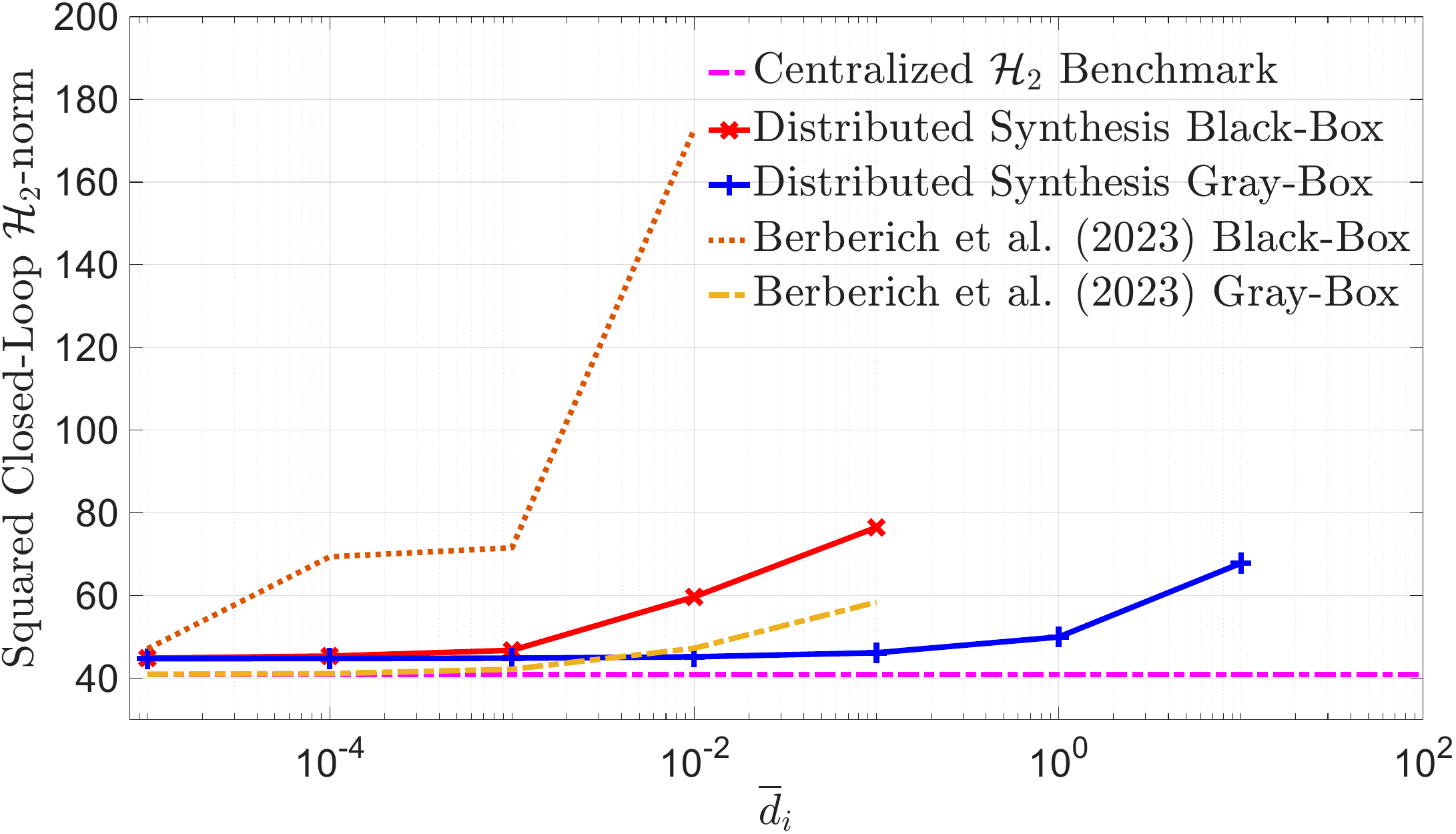}
    \caption{Squared closed-loop \(\mathcal{H}_2\)-norm under control design methods (i)-(v)}
    \label{fig:noisy_results}
\end{figure}
At very low noise levels, centralized gray-box (iv) achieves indistinguishable performance compared to the \(\mathcal{H}_2\) benchmark (v), whilst our distributed black-box (i) and gray-box (ii) schemes have around 10\% performance degradation due to the conservatism introduced to decouple the global Lyapunov matrix and constraint \eqref{eqn:relaxation_inequality_global} and the restriction of controller structure by \(\mathcal{G}_C\). Centralized black-box (iii) consistently has the worst performance over all noise levels and has the smallest upper bound on \(\overline{d}_i\) before infeasibility occurs, as no knowledge of the physical coupling topology or dynamics is included. We observe that (i) performs better, as the control design is inherently informed of the system structure.

Our distributed gray-box synthesis (ii) always outperforms its black-box counterpart (i) due to the inclusion of local model knowledge. Interestingly, (ii) starts to outperform (iv) as \(\overline{d}_i\) increases and remains feasible at a higher noise bound, which appears counterintuitive as (iv) provides an unstructured controller and avoids conservative conditions in synthesis. It can be explained by considering the structured way \(\overline{d}_i\) enters the distributed formulation (ii), whilst (iv) uses a single \(\overline{d} = \sum_{i\in\mathcal{V}} \overline{d}_i\). Perhaps introducing structured data-driven multipliers into the method in \cite{Berberich-et-al-2023-Prior-Knowledge-Data-Robust-Design} could reduce this gap. Finally, all data-driven methods become infeasible once the noise bound is sufficiently high.



Theorem \ref{thm:distributed_opt} guarantees that controllers synthesized using ADMM will asymptotically converge to those found by directly solving \eqref{eqn:distributed_synthesis_LMIs} as the number of ADMM iterations increases. The convergence in closed-loop control performance under a controller synthesized at a given number of iterations is shown in Figure \ref{fig:ADMM_conv} with a \(\log y\) axis, for \(\overline{d}_i = 10^{-5}\) under distributed gray-box synthesis (ii) and with \(\alpha = 2.5\). Similar convergence performance was observed across the domain of \(\overline{d}_i\) values tested.
\begin{figure}[h]
    \centering
    \includegraphics[width=0.85\linewidth]{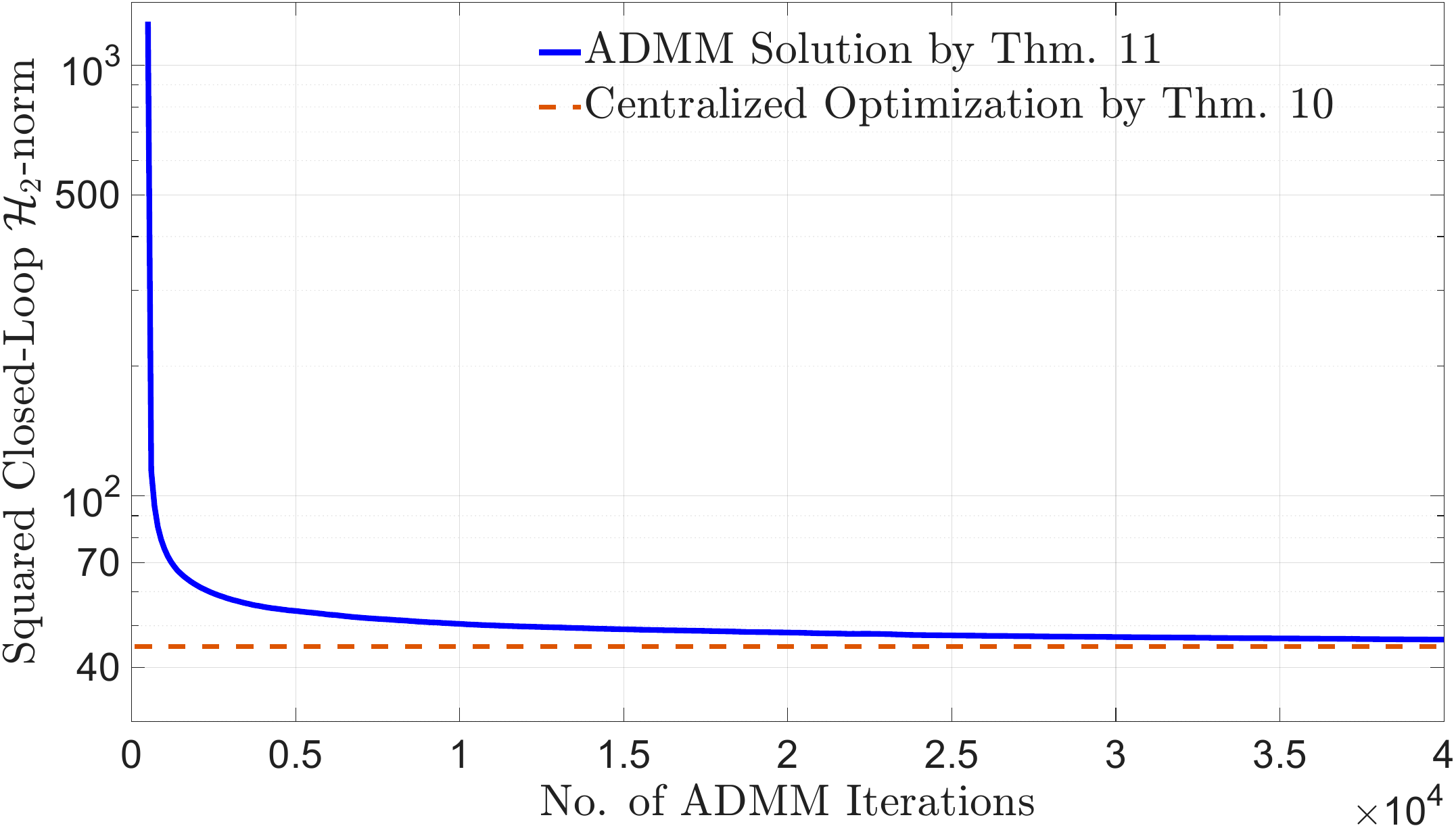}
    \caption{Convergence of control performance as the number of ADMM iterations increases}
    \label{fig:ADMM_conv}
\end{figure}

\section{Conclusion}
\label{section:conclusion}

In this paper, we proposed a distributed synthesis approach to the gray-box design of distributed \(\mathcal{H}_2\) controllers. The scheme is able to design controllers from data whilst incorporating available model knowledge and prior knowledge on the unknown dynamics. Avenues for future work include extending the method to the output-feedback setting and investigating the possibility of adaptive plug-and-play controller design.

\begin{ack}
The authors would like to thank Dr Matthias Köhler, Jiaxin Wang and Michael Cummins for insightful discussions whilst preparing this work.
\end{ack}


\bibliography{bibliography}
                                                   







\end{document}